\def\h{{\bf h}}
\def\g{{\bf g}}
\newcommand{\bea}{\begin{eqnarray}}
\newcommand{\eea}{\end{eqnarray}}
\newcommand{\e}{\eta}
\def\bi{\begin{itemize}}
\def\ei{\end{itemize}}
\def\bc{\begin{center}}
\def\ec{\end{center}}
\def\C{\hbox{$\mit I$\kern-.7em$\mit C$}}
\def\R{\hbox{$\mit I$\kern-.6em$\mit R$}}
\def\ket#1{|#1\rangle}
\newcommand{\one}{\mbox{$1 \hspace{-1.0mm}  {\bf l}$}}
\def\tr{\mathrm{tr}}
\def\ket#1{\left| #1\right>}
\def\bra#1{\left< #1\right|}
\newcommand{\proj}[1]{\ket{#1}\bra{#1}}
\newtheorem{theorem}{Theorem}
\newtheorem{appth}{Theorem}
\begin{document}

\author{J.I. de Vicente}
\affiliation{Departamento de Matem\'aticas, Universidad Carlos III
de Madrid, Legan\'es (Madrid), Spain}
\author{C. Spee}
\affiliation{Institute for Theoretical Physics, University of
Innsbruck, Innsbruck, Austria}
\author{B. Kraus}
\affiliation{Institute for Theoretical Physics, University of
Innsbruck, Innsbruck, Austria}
\title{The maximally entangled set of multipartite quantum states}

\begin{abstract}

Entanglement is a resource in quantum information theory when state
manipulation is restricted to Local Operations assisted by Classical
Communication (LOCC). It is therefore of paramount importance to
decide which LOCC transformations are possible and, particularly,
which states are maximally useful under this restriction. While the
bipartite maximally entangled state is well known (it is the only
state that cannot be obtained from any other and, at the same time,
it can be transformed to any other by LOCC), no such state exists in
the multipartite case. In order to cope with this fact, we introduce
here the notion of the Maximally Entangled Set (MES) of $n$-partite
states. This is the set of states which are maximally useful under
LOCC manipulation, i.\ e.\ any state outside of this set can be
obtained via LOCC from one of the states within the set and no state
in the set can be obtained from any other state via LOCC. We
determine the MES for states of three and four qubits and provide a
simple characterization for them. In both cases, infinitely many
states are required. However, while the MES is of measure zero for
$3$-qubit states, almost all $4$-qubit states are in the MES. This
is because, in contrast to the $3$-qubit case, deterministic LOCC
transformations are almost never possible among fully entangled
four-partite states. We determine the measure-zero subset of the MES
of LOCC convertible states. This is the only relevant class of
states for entanglement manipulation.

\end{abstract}
\maketitle

Multipartite entangled states constitute the essential ingredient
for many fascinating applications within quantum computation and
quantum communication \cite{Gothesis97,RaBr01}. The theory of
many-body states also plays an important role in other fields of
physics \cite{AmFa08}. As the existence of those practical and
abstract applications rests upon the subtle properties of
multipartite entangled states, one of the main goals in quantum
information theory is to gain a better understanding of the
non-local properties of quantum states. Whereas the bipartite case
is well understood, the theory of multipartite entanglement is still
in its infancy \cite{reviews}.

In the context of quantum information theory, entanglement is a
resource that allows one to perform certain information processing
tasks. This has led to the development of entanglement theory, which
deals with the qualification and quantification of entanglement and
with the manipulation of this resource in general \cite{reviews}.
Therein, the notion of Local Operations assisted by Classical
Communication (LOCC) plays a central role as this is the most
general form of manipulating a multipartite state by spatially
separated parties. Thus, LOCC convertibility induces the natural
ordering in the set of entangled states and the very fundamental
condition for a function to be an entanglement measure is that it
does not increase under LOCC. Hence, entanglement theory is a
resource theory in which entanglement is a resource for
manipulations restricted to LOCC. It is then a fundamental question
to ask which states are maximally entangled, i.\ e.\ which states
are maximally useful under the LOCC restriction. Nielsen
\cite{nielsen} has characterized all LOCC transformations which are
possible among arbitrary pure states in the bipartite case. From
there, it follows that the Bell state (and its generalization to
higher dimensions) is \textit{the} maximally entangled state as this
is the only state that can be transformed to any other by LOCC
operations. Thus, it is not surprising that the Bell state plays a
prominent role in most known bipartite quantum information protocols
such as teleportation \cite{teleportation} and cryptography
\cite{ekert} as well as in quantification of entanglement
\cite{DistillBenett}. Unfortunately, there is no straightforward
extension of Nielsen's theorem to the multipartite case and the
investigation of LOCC transformations is very difficult due to their
complicated mathematical structure \cite{chitambar1}. Indeed, LOCC
convertibility is in general unknown in the multipartite case save
for a few classes of states \cite{turgut}. For this among other
reasons, multipartite entanglement has been classified according to
other physically and/or mathematically motivated operations. Local
unitary operations (LUs) are the only invertible LOCC
transformations and, hence, they interconvert states with the same
entanglement. Recently, necessary and sufficient conditions for the
LU equivalence of pure $n$-qubit states have been derived
\cite{barbara}. Stochastic LOCC (SLOCC) operations identify states
which can be interconverted by LOCC non-deterministically but with a
non-zero probability of success. There exist two different SLOCC
classes for $3$-qubit states \cite{slocc3} but there are infinitely
many for more parties \cite{slocc4}. Both LU and SLOCC identify
fundamentally different forms of entanglement but they just define
equivalence classes and cannot be utilized to identify which states
are more useful than others. Thus, another approach has been
investigating separable transformations (SEP) (see \cite{Gour} and
references therein), which although without a clear operational
meaning, have a much simpler mathematical structure and they include
LOCC. However, the inclusion is strict \cite{sep} and there exist
SEP transformations that cannot be implemented by LOCC
\cite{sepnotlocc}. Other authors have tried to identify maximally
entangled multipartite states by extrapolating different particular
properties of the Bell state to the multipartite case \cite{maxent}.
However, let us stress here that, as argued above, LOCC
transformations induce the only operationally meaningful ordering in
the set of entangled states. Consequently, multipartite maximal
entanglement can only be rigorously established on the grounds of
maximal usefulness under LOCC. This is precisely the main aim of
this Letter.

Despite all the difficulties one faces when investigating LOCC
transformations, we introduce here the notion of the Maximally
Entangled Set (MES) of $n$-partite states. A MES $MES_n$ is the
minimal set of $n$-partite states such that any other truly
$n$-partite entangled state can be obtained deterministically from a
state in $MES_n$ via LOCC. In other words, $MES_n$ is a set of
states with the following properties: (i) No state in $MES_n$ can be
obtained from any other $n$-partite state via LOCC (excluding LU)
and (ii) for any $n$-partite state, $\ket{\Phi} \not\in MES_n$,
there exists a state in $MES_n$ from which $\ket{\Phi}$ can be
obtained via LOCC. Thus, it is the set of states which is maximally
entangled. We consider the simplest nontrivial cases of few-qubit
systems and we determine $MES_3$ and $MES_4$ \footnote{In this
paper, due to space restrictions, we only consider generic $4$-qubit
states, i.\ e.\ we only exclude a subset of states of measure
zero.}. Contrary to the bipartite case, the sets do not contain a
single state but infinitely many. Nevertheless, $MES_3$ is of
measure zero in the full set of $3$-qubit states and, hence, very
few states are maximally useful under LOCC. The situation changes
drastically already in the four-partite case as we show that $MES_4$
is of full measure. Thus, perhaps surprisingly, almost all $4$-qubit
states are in the MES. The reason for this is that almost all states
are \textit{isolated}, i.\ e.\ they cannot be obtained from nor
transformed to any other fully entangled state by deterministic LOCC
(excluding LU). Hence, LOCC induces a trivial ordering in the set of
entangled states and the possibility of LOCC conversions is very
rare in the multipartite case. This implies, that most states are
useless for entanglement manipulation via LOCC. However, we also
identify a zero-measure subset of states in the MES which are LOCC
convertible. Those are the most useful states regarding entanglement
manipulation. Hence, the investigation of this class of states
could, as was the case for the Bell state, lead to new multipartite
applications of quantum information.

Troughout the paper we denote the identity operator and the Pauli
operators by $\one, X, Y, Z$ (as well as by $\sigma_i$ with
$i=0,1,2,3$). Moreover, $W(\alpha)=\exp(i\alpha W)$ for $W=X,Y,Z$.
Whenever it does not lead to any confusion we ignore normalization.
$\mathcal{G}$ denotes the set of local invertible (not necessarily
determinant $1$) operators and $g,h$ denote elements of
$\mathcal{G}$, e.g. $g=g_1\otimes \ldots \otimes g_n$, with $g_i\in
GL(2)$. Two states are said to be in the same SLOCC class (LU class)
if there exists a $g\in \mathcal{G}$ ($g$ local unitary) which maps
one state to the other.

Notice that when studying LOCC convertibility, we ignore pure LU
transformations which can always be performed. Hence, in rigor, we
characterize LOCC convertibility among the LU-equivalence classes
and we only consider one representative for each class. Notice as
well that we consider LOCC transformations among fully-entangled
states which are, hence, only possible among states in the same
SLOCC class.

For the present work it is relevant to review the results on the
mathematically more tractable SEP presented in \cite{Gour}. We
denote by $S(\Psi)=\{S\in \mathcal{G}: S\ket{\Psi}=\ket{\Psi}\}$ the
set of symmetries of $\ket{\Psi}$. In \cite{Gour} it has been shown
that a state $\ket{\Psi_1} = g \ket{\Psi}$ can be transformed via
SEP to $\ket{\Psi_2}= h \ket{\Psi}$ iff there exists a $m \in N$ and
a set of probabilities, $\{p_k\}_{1}^m$ ($p_k\geq 0, \sum_{k=1}^m
p_k=1$) and $S_k \in S(\Psi)$ such that \bea \label{EqSep} \sum_k
p_k S_k^\dagger H S_k= r G.\eea Here, $H=h^\dagger h \equiv
\bigotimes H_i$, and $G=g^\dagger g \equiv \bigotimes G_i$ are local
operators and $r=\frac{n_{\Psi_2}}{n_{\Psi_1}}$ with
$n_{\Psi_i}=||\ket{\Psi_i}||^2$. The local POVM elements
accomplishing the task to transform $\ket{\Psi_1}$ into
$\ket{\Psi_2}$ are given by $M_k=\frac{ \sqrt{p_k}}{\sqrt{r}} h S_k
g^{-1}$. Note that if all $S_k$ are unitary then Eq. (\ref{EqSep})
implies that $r G \prec H$. That is, the eigenvalues of the positive
operator $ r G$ are majorized by the eigenvalues of $H$
\cite{nielsen,Gour}.

We first use this result to set necessary conditions for LOCC
convertibility. The basic idea here is to show that any non-unitary
symmetry can be used to define a standard form up to LUs for the
different SLOCC classes. Once this standard form is fixed, the only
operations which make a transformation possible are the unitary
symmetries \footnote{In a future publication we will present a
systematic method to determine the symmetry of a state $\ket{\Psi}$
to analyze all possible LOCC transformations; however, for the
classes considered here, the determination of the corresponding
symmetries is straightforward.}. Later, we show that these
transformations are already so constrained that, whenever possible,
one can also find a corresponding LOCC protocol. Moreover, as we
will see, the LOCC protocols are always quite simple. In fact, each
party needs to measure at most once.

Before we investigate the MES for three and four qubits, it will be
useful to keep in mind that any state with the property such that
all single-qubit reduced states are completely mixed is in the MES
\cite{Bennett99}. Thus, any connected graph state as well as any
error correcting code is in the MES.


Let us now determine $MES_3$, the MES for three qubits. As there
exist two inequivalent tripartite entangled SLOCC classes, the GHZ
class and the W class \cite{slocc3}, $MES_3$ must include at least
two states. Notice that these are the only classes of multipartite
states for which LOCC transformations have been characterized
\cite{turgut}. Since this exhausts all the possible classes
(exclusively) for $3$-qubit states, one could also determine $MES_3$
from the results therein. However, in order to demonstrate our
techniques, we now derive $MES_3$ independently from those results.

Let us begin with the GHZ class. Since the single qubit reduced
states of the GHZ state $\ket{GHZ}= \ket{000}+\ket{111}$ are
completely mixed, we know that it is in $MES_3$. It is well known
that any element of the symmetry group of the GHZ state can be
written as \cite{verstraete} $\bar{X}^k P_{\vec \gamma}$, for $k\in
\{0,1\}$. Here, $P_{\vec{\gamma}}=P_{\gamma_1}\otimes
P_{\gamma_2}\otimes P_{(\gamma_1\gamma_2 )^{-1}}$ with
$P_\gamma=diag(\gamma, \gamma^{-1})$ and $\gamma_i \in \C$ and
$\bar{X}\equiv X^{\otimes 3}$. We now use this symmetry to determine
a standard form for the states in the GHZ class. For any positive
definite $2\times 2$ matrix $g^\dag g$ there exists a $\gamma$ such
that $g^\dag g=P_\gamma^\dagger g_x^\dag g_x P_\gamma $ where
$g_x^\dag g_x\in span\{\one, X\}$ and $\tr(g_x^\dag g_xX)\geq0$.
Thus, any state in the GHZ class can be written (up to LUs
\footnote{The LUs are chosen such that $g_x=\sqrt{g_x^\dagger
g_x}$.}) as \bea \label{GHZsf} g^1_x \otimes g^2_x\otimes (g^3_x
P_{z})\ket{GHZ},\eea with $z\in \C$. Here, and in the following,
$g_w^i \in span\{\one, W\}$, for $W\in \{X,Y,Z\}$ such that
$G_w^i=(g_w^i)^\dagger g_w^i= 1/2\one + g^{(i)}_1 W$, where
$g^{(i)}_1 \in [0,1/2)$ to ensure that the operators are invertible
(otherwise entanglement is destroyed). In order to simplify the
notation, we will allow for negative values of $g^{(3)}_1$ in the
following, even though the corresponding state would be of the form
as in Eq. (\ref{GHZsf}) for a properly chosen value of $z$.

Using this standard form, we show now that all states in the MES
(apart from $\ket{GHZ}$) are of the form $g^1_x \otimes g^2_x\otimes
g^3_x \ket{GHZ}$ (i.e. $z=\pm 1$ in Eq. (\ref{GHZsf})) with no
trivial $g^i_x$ (for details see Appendix A). Let $\ket{\Psi_1}= g
\ket{GHZ}$ be an arbitrary initial state and $\ket{\Psi_2}= h
\ket{GHZ}$ an arbitrary final state, where $g=g_x^1\otimes
g_x^2\otimes g_x^3 P_{z_g}$, with $z_g=|z_g|e^{i\alpha_g}$,  and
similarly for $h$. Due to Eq. (\ref{EqSep}), we have that
$\ket{\Psi_1}$ can be transformed to $\ket{\Psi_2}$ via SEP iff
there exist probabilities $p_{k,{\vec \gamma}}$ such that
$\sum_{k,{\vec \gamma}} p_{k,{\vec \gamma}} \bar{X}^k P_{\vec
\gamma}^\dagger H P_{\vec \gamma}\bar{X}^k=r G$. Equating the
$\ket{lll}\bra{mmm}$ matrix elements for $l,m=0,1$, it follows that
if none of the $h_x^i$ is trivial, then a state with $z_h=\pm 1$ can
only be obtained from a state with $z_g=\pm 1$.

On the other hand, any state with $z_h \neq \pm 1 $ or at least one
trivial $h_x^i$ can be obtained from a state with $z_g=\pm 1$. In
fact, for those states one can not only derive a SEP, but a LOCC
protocol which accomplishes this task. For instance, the final state
$\ket{\Psi_2} = h_x^1\otimes h_x^2\otimes h_x^3 P_{z_h} \ket{\Psi}$
(with $z_h \neq \pm 1 $) can be reached from the state $\ket{\Psi_1}
= h_x^1\otimes h_x^2\otimes g_x^3 \ket{\Psi}$ with a properly chosen
operator $g_x^3$ via the following LOCC protocol: Party 3 applies
the POVM $\{M_1\propto h_x^3 P_{z_h} (g_x^3)^{-1},M_2\propto h_x^3
P_{z_h} X (g_x^3)^{-1}\}$ and in case outcome $1$ ($2$) is obtained,
all other parties do nothing (apply a $X$ operation). Since $X$
commutes with $h_x^i$, it can be easily seen that the desired state
is obtained for both outcomes. Thus, the only GHZ-class states that
are in the MES  are of the form $g_x^1\otimes g_x^2\otimes g_x^3
\ket{\Psi}$.


Let us now treat the $W$ class. Using the symmetry of the $W$ state
$\ket{W}= \ket{001}+\ket{010}+\ket{100}$ \cite{verstraete} (see also
Appendix A) it is easy to see that any state in the $W$ class can be
written as $g_1\otimes g_2 \otimes \one \ket{W}$, where \bea
g_1=\begin{pmatrix} 1&0\\ 0& x_1(g)/x_3(g)
\end{pmatrix},\quad
  g_2=\begin{pmatrix} x_3(g)&x_0(g)\\ 0& x_2(g) \end{pmatrix}.\eea
We now show that states of this form are in the MES iff $x_0(g)=0$
(for details see Appendix A).

First of all, note that we only need to consider unitary symmetries
since any POVM element transforming the state
$\ket{\Psi_1}=g_1\otimes g_2 \otimes \one \ket{W}$ into
$\ket{\Psi_2}=h_1\otimes h_2 \otimes \one \ket{W}$, would be of the
form $h S_k g^{-1}=h_1 s_k^1 g_1^{-1}\otimes h_2 s_k^2
g_2^{-1}\otimes s_k^3$, which can only be implemented via LOCC if
party 3 just applies a unitary, i.e. if $s_k^3$ is unitary. It can
then be shown that in this case SEP is only possible if the whole
symmetry is unitary.

The only unitaries leaving the $W$ state invariant (up to a global
phase) are of the form $Z(\alpha)^{\otimes 3}$ ($\alpha \in \R$).
Suppose now that $x_0(h)=0$. In this case, $H_2$ as well as $H_1$
commutes with any symmetry operator. Hence, $\ket{\Psi_1}$ can be
transformed into $\ket{\Psi_2}$ via SEP (see Eq. (\ref{EqSep})) iff
$ H_1\otimes H_2 \otimes \one =r G_1\otimes G_2 \otimes \one$.
Clearly, this implies that $x_0(g)=0$. Thus, states with $x_0(h)=0$
can only be obtained from states with $x_0(g)=0$. Moreover, in this
case, the states can only be converted into each other if they are
LU equivalent, since there is only one POVM element. This shows that
states with $x_0(g)=0$ are in the MES. Similar to the GHZ case, one
can construct a LOCC protocol which reaches any state with
$x_0(h)\neq 0$ from one with $x_0(g)=0$. This shows the following
theorem.

\begin{theorem} The MES of three qubits, $MES_3$, is given by \bea \label{mes3} MES_3=\{g_x^1\otimes g_x^2\otimes
g_x^3 \ket{GHZ}, g_1 \otimes g_2 \otimes \one \ket{W}\},\eea where
no $g^i_x\propto \one$ (except for the GHZ state) and $g_1$ and
$g_2$ are diagonal.
\end{theorem}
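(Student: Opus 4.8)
The plan is to establish the two defining properties of the MES — (i) no listed state can be obtained from another fully entangled state by LOCC up to LU, and (ii) every non-listed fully entangled state is reachable by LOCC from a listed one — working separately within each of the two entangled three-qubit SLOCC classes. This split is legitimate because deterministic LOCC among fully entangled three-qubit states is confined to a single SLOCC class. In both classes the strategy is identical: use the symmetry group to fix a standard form (already derived in the text) that parametrizes the LU-classes, prove (i) via the SEP necessary condition (\ref{EqSep}) (recall $\mathrm{LOCC}\subseteq\mathrm{SEP}$, so any constraint on SEP also constrains LOCC), and prove (ii) by exhibiting an explicit two-outcome protocol.

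For the GHZ class I would take the standard form (\ref{GHZsf}) together with the symmetry $\bar{X}^k P_{\vec\gamma}$ and impose (\ref{EqSep}) as $\sum_{k,{\vec\gamma}} p_{k,{\vec\gamma}}\,\bar{X}^k P_{\vec\gamma}^\dagger H P_{\vec\gamma}\bar{X}^k = rG$. Comparing the $\ket{lll}\bra{mmm}$ matrix elements ($l,m\in\{0,1\}$) separates out the dependence on $z_g$ and $z_h$ and shows that a target with $z_h=\pm1$ and all $h_x^i$ nontrivial can be produced only from a source with $z_g=\pm1$; this is property (i) for the listed GHZ-class states, while the GHZ state itself lies in the MES because all its single-qubit marginals are maximally mixed \cite{Bennett99}. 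For the converse I would check that the POVM $\{M_1\propto h_x^3 P_{z_h}(g_x^3)^{-1}, M_2\propto h_x^3 P_{z_h}X(g_x^3)^{-1}\}$ on party 3, with a conditional $X$ on the remaining parties, deterministically reaches every target having $z_h\neq\pm1$ or at least one trivial $h_x^i$, which both establishes (ii) and justifies the restriction to nontrivial $g_x^i$.

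For the $W$ class I would first reduce to unitary symmetries: any POVM element $hS_kg^{-1}$ acts on party 3 simply as $s_k^3$, which is LOCC-implementable only if $s_k^3$ is unitary, and since every $W$-symmetry is of the form $Z(\alpha)^{\otimes3}$ this forces the whole $S_k$ to be unitary. Assuming $x_0(h)=0$ makes both $H_1$ and $H_2$ commute with every symmetry, so (\ref{EqSep}) collapses to $H_1\otimes H_2\otimes\one = rG_1\otimes G_2\otimes\one$, which forces $x_0(g)=0$ and indeed LU-equivalence (a single symmetric term); this yields (i), i.e.\ the states with $g_1,g_2$ diagonal are in the MES. An explicit protocol analogous to the GHZ one then reaches any $x_0(h)\neq0$ state from an $x_0(g)=0$ state, giving (ii). Collecting the two class results and noting that $x_0(g)=0$ is precisely the condition that $g_1,g_2$ be diagonal produces (\ref{mes3}).

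The main obstacle is the step that pins down which symmetries can actually drive a deterministic transformation once the standard form is fixed: one must argue that the non-unitary part of the symmetry group has already been absorbed into the standard form, leaving only unitary symmetries available. For the $W$ class this is the factorization-on-party-$3$ argument above; for the GHZ class it is hidden inside the matrix-element comparison, which is delicate because one needs the full continuous family $P_{\vec\gamma}$ together with the discrete $\bar{X}^k$ to be \emph{insufficient} to move $z$ away from $\pm1$. Once this reduction is in place, the remaining work — evaluating the matrix elements and verifying that the displayed POVMs are valid and deterministic — is routine.
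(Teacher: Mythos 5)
Your overall architecture matches the paper's proof (split by SLOCC class, use the symmetry group to fix a standard form, rule out SEP via Eq.~(\ref{EqSep}), exhibit explicit two-outcome protocols for the reachable states), and your treatment of the $W$ class is complete: the reduction to unitary symmetries via party 3, the collapse to $H_1\otimes H_2\otimes\one=rG_1\otimes G_2\otimes\one$ when $x_0(h)=0$, and the single-POVM-element conclusion of LU equivalence are exactly the paper's Lemma~2. However, there is a genuine gap in your GHZ-class argument. You claim that showing ``a target with $z_h=\pm1$ and all $h_x^i$ nontrivial can be produced only from a source with $z_g=\pm1$'' already establishes property (i). It does not: that step only confines the possible sources to the candidate family $\{g_x^1\otimes g_x^2\otimes g_x^3\ket{GHZ}\}$; it leaves open the possibility that one member of this family is LOCC-reachable from a \emph{different, LU-inequivalent} member, which would disqualify the reached state from the MES. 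The $\ket{lll}\bra{mmm}$ comparison alone cannot exclude this, because the off-diagonal elements only constrain the \emph{product} $h_1^{(1)}h_1^{(2)}h_1^{(3)}$ versus $g_1^{(1)}g_1^{(2)}g_1^{(3)}$, not the individual factors.

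The paper closes this with a separate argument you omit: first $|z_g|=|z_h|=1$ and $r=1$; then taking the trace of Eq.~(\ref{EqSep}) gives $\sum_{k,\vec\gamma}p_{k,\vec\gamma}(|\gamma_1|^2+|\gamma_1|^{-2})(|\gamma_2|^2+|\gamma_2|^{-2})(|\gamma_1\gamma_2|^2+|\gamma_1\gamma_2|^{-2})=8$, which forces $|\gamma_i|=1$ for every term in the sum, i.e.\ only \emph{unitary} symmetries can appear. This unlocks the majorization consequence of Eq.~(\ref{EqSep}) (the eigenvalues of $G_i$ are majorized by those of $H_i$, so each $|g_1^{(i)}|\leq|h_1^{(i)}|$), and combined with the norm condition $g_1^{(1)}g_1^{(2)}g_1^{(3)}=h_1^{(1)}h_1^{(2)}h_1^{(3)}$ from $r=1$ this forces $|h_1^{(i)}|=|g_1^{(i)}|$ for all $i$, hence LU equivalence. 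You correctly flag that ``the non-unitary part of the symmetry group must be shown to be unavailable'' as the main obstacle, but you locate it ``inside the matrix-element comparison,'' where it is not; it requires the trace argument above. Without this step the GHZ half of property (i) is unproven. The rest of your proposal (the reachability protocols and the $W$-class analysis) is sound and coincides with the paper's.
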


Interestingly, $MES_3$ has a very simple parametrization in terms of
the decomposition of $3$-qubit states that was introduced by some of
us in \cite{us}. While $3$-qubit LU classes are parameterized by 5
real parameters, it can be shown that any state in $MES_3$ belongs
(up to LUs) to the three-parameter set (see Appendix A) \bea
\{\ket{\Psi(a,\beta,\beta')}\equiv
\ket{0}\ket{\Psi_s}+\ket{1}Y(\beta')\otimes
Y(\beta)\ket{\Psi_s}\},\eea where
$\ket{\Psi_s}=a\ket{00}+\sqrt{1-a^2}\ket{11}$ is in Schmidt
decomposition and $a,\beta, \beta^\prime \in \R$. Using this form,
we show now that any state in $MES_3$ can be mapped into some other
state (outside of $MES_3$) via a non-trivial LOCC protocol, which
also implies that no $3$-qubit state is isolated. Note that
$X\otimes ZY(-\beta')\otimes ZY(-\beta)$ leaves any state
$\ket{\Psi(a,\beta,\beta')}$ invariant. Using this symmetry, it is
easy to see that any $\ket{\Psi(a,\beta,\beta')}$ can be transformed
into any state $A\otimes \one^{\otimes 2}
\ket{\Psi(a,\beta,\beta')}$, where $A$ is such that $\tr(A^\dagger A
X)=0$ and  $\tr(A^\dagger A)=1$. The corresponding POVM is locally
realizable and it is given by POVM elements $M_1=A \otimes \one,
M_2= A X  \otimes ZY(-\beta')\otimes ZY(-\beta)$.

In summary, although it contains infinitely many states, the MES for
three qubits is of measure zero. Furthermore, no 3-qubit state is
isolated, i.\ e.\ LOCC entanglement manipulation from every state in
the MES is always possible.

We move now to the $4$-qubit case. Since there are infinitely many
SLOCC classes, $MES_4$ must contain infinitely many states. Generic
states belong to the different SLOCC classes known as $G_{abcd}$
with the representatives \cite{slocc4}
\begin{align}
|\Psi\rangle&=\frac{a+d}{2}(|0000\rangle+|1111\rangle)+\frac{a-d}{2}(|0011\rangle+|1100\rangle)\nonumber\\
&+\frac{b+c}{2}(|0101\rangle+|1010\rangle)+\frac{b-c}{2}(|0110\rangle+|1001\rangle),\label{seed}
\end{align}
where $a,b,c,d \in \C$ with $a\neq\pm b$ etc.. In the following, the
states of the form in Eq. (\ref{seed}) will be called the seed
states. Due to the normalization and the irrelevant global phase,
the seed states are parameterized by $6$ parameters. It can be
easily seen that the symmetry in this case is given by
$\{\sigma_i^{\otimes4}\}_{i=0}^4$. For any state $\ket{\Phi}$ in any
of these SLOCC classes, there exists a local invertible matrix $g\in
\mathcal{G}$ and a seed state $\ket{\Psi}$ such that $\ket{\Phi}= g
\ket{\Psi}$. Without loss of generality, we normalize the positive
operators $G_i=g_i^\dagger g_i$ such that $\tr(G_i)=1$ and use the
notation $G_i=1/2\one + \sum_k g_k^{(i)} \sigma_k$, with
$g_k^{(i)}\in \R$. Since we consider fully entangled $4$-qubit
states, $G_i$ is a positive full-rank operator, i.e.  $0\leq
|\g^{(i)}| < 1/2$, where $\g^{(i)}=(g^{(i)}_1,g^{(i)}_2,g^{(i)}_3)$.
Considering the trace of Eq. (\ref{EqSep}), we obtain $r=1$. Note
that, for each $i$, the vector containing the eigenvalues of $G_i$,
which are $1/2 \pm |\g^{(i)}|$, must be majorized by the
corresponding vector for $H_i$. Thus, $|\g^{(i)}|$ cannot decrease
under LOCC, and therefore these parameters behave monotonically
under LOCC. The symmetry of the seed states only allows to
simultaneously change for all $i$ the sign of two parameters
($g_k^{(i)}$ and $g_l^{(i)}$, where $k,l \in \{1,2,3\}$ and $k \neq
l$) in $G_i$. Thus, the matrices $G_i$ can be made unique. Moreover,
by sorting the coefficients in Eq. (\ref{seed}) this leads to a
unique standard form. This implies that two generic states are LU
equivalent iff their standard forms coincide. Since four-qubit LU
equivalence classes are parameterized by 18 parameters \cite{LU4},
and the set of states considered here is parameterized by the $6$
independent seed parameters and the $12$ independent SLOCC
parameters, $g^{(i)}_j$, for $i\in\{1,2,3,4\}, j\in\{1,2,3\}$, the
set is, as expected, of full measure (and dense) in the set of
four-qubit states.

We now study  $MES_4$, the MES of four qubits. Notice that all
single qubit reduced states of a seed state are completely mixed,
which implies that all seed states are in $MES_4$. We proceed now as
in the three-partite case. First, we derive necessary conditions for
a state to be reachable via SEP, then we derive the corresponding
LOCC protocol (if it exists). Contrary to before, we will see that
almost no states can be reached via LOCC.


One of the main differences between the four-partite and the
three-partite case is that there are only finitely many symmetries
and that all symmetries are unitary. This fact can be used to derive
very simple necessary conditions for the existence of the LOCC
protocol. The main idea here is to observe that for
$\ket{\Psi_1}=g\ket{\Psi}$ and $\ket{\Psi_2}=h\ket{\Psi}$, Eq.
(\ref{EqSep}) implies that

\bea \label{EQ_Tensor}{\cal E}_{4}(H)={\cal E}(H_1)\otimes {\cal
E}(H_2)\otimes {\cal E}(H_3)\otimes {\cal E}(H_4), \eea where ${\cal
E}_4$ is the completely positive map given in Eq.\ (\ref{EqSep}), i.
e. ${\cal E}_4(\rho)=\sum_{k=0}^3 p_k \sigma_k^{\otimes 4} \rho
\sigma_k^{\otimes 4}$,  and ${\cal E}(\rho)=\sum_{k=0}^3 p_k
\sigma_k \rho \sigma_k$. Note that Eq.\ (\ref{EQ_Tensor}) only
depends on the state $\ket{\Psi_2}$ and is independent of
$\ket{\Psi_1}$. Considering Eq.\ (\ref{EQ_Tensor}) for two systems,
i.\ e.\ tracing over the other two, already yields very strong
necessary conditions for $\ket{\Psi_2}$ to be reachable via SEP. In
\cite{supmat}, we show which of those states can indeed be reached
via LOCC by constructing, analogous to the three-partite case, the
corresponding LOCC protocol. With that, we obtain the following
theorem (for details of the proof see Appendix B).


\begin{theorem} \label{Thgen4} A generic state, $h \ket{\Psi}$,
is reachable via LOCC from some other state iff (up to permutations)
 either $h= h^1\otimes h_w^2 \otimes h_w^3 \otimes h_w^4$, for
$w\in\{x,y,z\}$ where $h^1\neq h_w^1$ or $h= h^1\otimes\one^{\otimes
3}$ with $h^1\not \propto \one$ arbitrary.
\end{theorem}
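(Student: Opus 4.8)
The plan is to prove the two implications separately. I parametrize the local operators of the target $\ket{\Psi_2}=h\ket{\Psi}$ as $H_i=\tfrac12\one+\h^{(i)}\cdot\vec\sigma$ with $0\le|\h^{(i)}|<1/2$, and use that the seed symmetries are the four unitaries $\sigma_k^{\otimes4}$, so Eq.~(\ref{EqSep}) with $r=1$ becomes $\sum_k p_k\,\sigma_k^{\otimes4}H\sigma_k^{\otimes4}=G$ for some probability vector $\{p_k\}_{k=0}^3$ and some valid product operator $G$. For the forward (necessary) direction everything can be read off Eq.~(\ref{EQ_Tensor}); for the converse I will exhibit an explicit LOCC protocol.

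First I would extract the strong necessary conditions by tracing Eq.~(\ref{EQ_Tensor}) down to two parties, which yields, for every pair $i\neq j$,
\begin{equation}
\sum_k p_k(\sigma_k H_i\sigma_k)\otimes(\sigma_k H_j\sigma_k)={\cal E}(H_i)\otimes{\cal E}(H_j).
\end{equation}
Since conjugation by $\sigma_k$ flips the two Bloch components orthogonal to the $k$-axis, writing this action as $\h\mapsto T_k\h$ gives ${\cal E}(H_i)=\tfrac12\one+\sum_a d_a h_a^{(i)}\sigma_a$ with $d_a=\sum_k p_k(T_k)_{aa}$ (so $d_1=p_0+p_1-p_2-p_3$, and analogously for $d_2,d_3$). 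Expanding both sides in the Pauli basis, the identity and single-body terms cancel automatically, and only the two-body correlations remain, giving $C_{ab}=d_ad_b$ whenever $h_a^{(i)}h_b^{(j)}\neq0$, where $C_{ab}=\sum_k p_k(T_k)_{aa}(T_k)_{bb}$.

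The heart of the matter is then a finite analysis of these conditions. The diagonal case forces $d_a^2=C_{aa}=1$ as soon as two distinct parties share a nonzero $a$-th Bloch component; since $|d_a|\le1$ with equality only when $\{p_k\}$ is supported on $\{0,a\}$ or on its complementary pair, this collapses the distribution to a binary one and singles out one axis $w$ that is left invariant ($d_w=1$). Substituting back, the off-diagonal identities $C_{ab}=d_ad_b$ then forbid any second party from carrying a component off the $w$-axis, and a direct computation shows that the remaining distributions with support of size $\ge3$ can satisfy the off-diagonal conditions only by sending one probability to zero, i.e.\ by collapsing to the binary axis-aligned case. This yields exactly the dichotomy of the theorem: either at most one party is nontrivial, so $h=h^1\otimes\one^{\otimes3}$ up to permutation, or at least three parties are aligned along a common axis $w$ with the fourth arbitrary; the requirement $h^1\neq h_w^1$ appears because, with $d_w=1$, the off-axis components of party $1$ are the only ones that change, so if party $1$ were also along $w$ the map would reduce to an LU.

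For sufficiency I would reproduce the three-qubit construction. In case (a) with axis $w$, take the initial state defined by $G_i=H_i$ on the three aligned parties and $G_1={\cal E}(H_1)$ with the binary weights $\{p_0,p_w\}$; party $1$ applies the two-outcome POVM $\{\sqrt{p_0}\,h_1 g_1^{-1},\,\sqrt{p_w}\,h_1 W g_1^{-1}\}$, which is complete precisely because $p_0 H_1+p_w W H_1 W={\cal E}(H_1)=G_1$, and on the nontrivial outcome the other three parties apply $W$. This returns the target since $W$ commutes with every $h_w^i$ and $W^{\otimes4}\ket\Psi=\ket\Psi$; case (b) is the same with $W$ replaced by any $\sigma_k$, the aligned operators being $\propto\one$. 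The step I expect to be most delicate is the exhaustive exclusion of all support-size-$\ge3$ distributions in the necessary direction, where the diagonal conditions no longer trivialize and one must verify algebraically that the off-diagonal identities are inconsistent unless the support shrinks; a secondary point requiring care is the bookkeeping of LU-triviality, which is exactly what pins down the condition $h^1\neq h_w^1$.
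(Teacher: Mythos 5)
Your setup is the paper's in different clothes: the quantities $d_a$ and $C_{ab}$ you introduce are exactly the entries $\eta_a$ of $N_1=\vec{\eta}\vec{\eta}^T$ and of the matrix $N_2$ appearing in the paper's pairwise condition $\h^{(i)}(\h^{(j)})^T\odot(N_1-N_2)=0$, and your sufficiency protocols (the two-outcome POVM $\{\sqrt{p_0}\,h^1(g^1)^{-1},\sqrt{p_w}\,h^1W(g^1)^{-1}\}$ with the aligned parties correcting by $W$, and its four-outcome analogue for case (i)) are the ones the paper uses. The sufficiency half is fine, as is the branch of your necessity argument in which two distinct parties share a nonzero Bloch component along some axis $a$, forcing $d_a^2=C_{aa}=1$ and a binary distribution.

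The gap is in the remaining branch of the necessity argument: the two-party reduced conditions you rely on exclusively are provably too weak to establish the theorem. Take the uniform distribution $p_k=1/4$. Then every $d_a=0$ and every off-diagonal $C_{ab}=p_0-p_a-p_b+p_c=0$, so all of your identities $C_{ab}=d_ad_b$ read $0=0$ and hold for \emph{any} target whose parties have pairwise disjoint Bloch supports. In particular, $h=h_x^1\otimes h_y^2\otimes h_z^3\otimes\one$ passes every pairwise test (with ${\cal E}(H_I)=\tfrac12\one$ for all $I$, i.e., candidate initial state equal to the seed state), yet it is neither of form (i) nor of form (ii). So your claim that support-size-$\ge3$ distributions ``can satisfy the off-diagonal conditions only by sending one probability to zero'' is false, and this configuration slips through your analysis. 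What actually kills it is a \emph{three-body} correlation: the coefficient of $X\otimes Y\otimes Z\otimes\one$ is invariant under conjugation by every $\sigma_k^{\otimes4}$ (the product of the three signs is always $+1$), so it survives in ${\cal E}_4(H)$ with weight $1$, whereas the product ${\cal E}(H_1)\otimes\cdots\otimes{\cal E}(H_4)$ carries it with weight $d_1d_2d_3=0$; hence Eq.~(\ref{EQ_Tensor}) fails even though all of its two-party marginals hold. This is why the paper, after the pairwise analysis, additionally imposes the factorization condition for the grouping of parties $(1,2)$ against party $3$ (and against party $4$), i.e., genuinely three-party conditions, to conclude $\h^{(3)}=\h^{(4)}=0$ in this branch. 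To repair your proof you must supplement the pairwise identities with these higher-order correlation constraints (or argue directly from the full four-party equation) before the case analysis closes.
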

The states in Theorem \ref{Thgen4} form a family defined by only 12
parameters, which implies that the set of states that can be reached
via LOCC is of measure zero. Therefore, all the remaining generic
states are necessarily in $MES_4$, which is then of full measure and
contains almost all states.

Let us now characterize which states can indeed be used for
entanglement manipulation, i.e. which of them can be transformed by
LOCC into another state. Since a state $\ket{\Psi_1}=g\ket{\Psi}$
having this property can only be transformed into some state
$\ket{\Psi_2}=h\ket{\Psi}$ given in Theorem \ref{Thgen4}, the
conditions ${\cal E}(H_i)=G_i$ [see Eq.\ (\ref{EqSep})] imply that
$G_i=(g_w^i)^\dagger g_w^i$ for $i=2,3,4$. Indeed, one can easily
show that any state obeying the conditions above allows for
non-trivial LOCC transformations (see Appendix B). Thus, we have the
following theorem, which shows that deterministic LOCC manipulations
among fully entangled $4$-qubit states are almost never possible.

\begin{theorem} \label{theoremISO}
A generic state $g \ket{\Psi}$ is convertible via LOCC to some other
state iff (up to permutations) $g=g^1\otimes g_w^2 \otimes g_w^3
\otimes g_w^4$ with $w\in \{x,y,z\}$ and $g^1$ arbitrary.
\end{theorem}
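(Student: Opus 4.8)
The plan is to prove both directions of the equivalence by combining the classification of reachable states in Theorem \ref{Thgen4} with the single-party marginals of the SEP condition.

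For necessity, I would assume that $g\ket{\Psi}$ is LOCC-convertible to some non-LU-equivalent $h\ket{\Psi}$. Since LOCC is contained in SEP, the target must be reachable, so Theorem \ref{Thgen4} allows me to assume (up to a permutation of the parties) that $h=h^1\otimes h_w^2\otimes h_w^3\otimes h_w^4$ for some $w\in\{x,y,z\}$, i.e. $H_i\in span\{\one,W\}$ for $i=2,3,4$; the alternative $h=h^1\otimes\one^{\otimes3}$ is the subcase $h_w^i\propto\one$. The key single-party consequence of Eq. (\ref{EqSep}) is ${\cal E}(H_i)=G_i$ for each $i$, where ${\cal E}(\rho)=\sum_k p_k\sigma_k\rho\sigma_k$; this follows by comparing the tensor-product form of Eq. (\ref{EQ_Tensor}) with $G=G_1\otimes G_2\otimes G_3\otimes G_4$ (using ${\cal E}_4(H)=rG=G$) and taking partial traces, since every factor has unit trace. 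I would then note that ${\cal E}$ preserves $span\{\one,W\}$: each $\sigma_k$ commutes or anticommutes with $W$, so ${\cal E}(W)=cW$ for some real $c$ and hence ${\cal E}(1/2\one+h^{(i)}W)=1/2\one+ch^{(i)}W$. Therefore $G_i={\cal E}(H_i)\in span\{\one,W\}$ for the same $w$ and all $i=2,3,4$, which is precisely the asserted form $g=g^1\otimes g_w^2\otimes g_w^3\otimes g_w^4$ with $g^1$ unconstrained.

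For sufficiency, given $g$ of this form I would exhibit an explicit LOCC protocol acting nontrivially only on the distinguished party while keeping $h_w^i=g_w^i$ for $i=2,3,4$. Using the symmetries $S_0=\one^{\otimes4}$ and $S_W=W^{\otimes4}\in S(\Psi)$, the POVM elements $M_k=\sqrt{p_k}\,hS_kg^{-1}$ factorize, because $g_w^i$ commutes with $W$: one obtains $M_0\propto(h^1(g^1)^{-1})\otimes\one^{\otimes3}$ and $M_W\propto(h^1W(g^1)^{-1})\otimes W^{\otimes3}$. Thus party 1 performs the two-outcome measurement $\{h^1(g^1)^{-1},\,h^1W(g^1)^{-1}\}$ (suitably weighted) and, upon the second outcome, parties 2,3,4 each apply $W$; both branches yield $h^1\otimes g_w^2\otimes g_w^3\otimes g_w^4\ket{\Psi}$. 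This single-measurement LOCC protocol is valid precisely when Eq. (\ref{EqSep}) restricted to party 1 holds, i.e. $p_0H^1+p_WWH^1W=G^1$ with $p_0+p_W=1$.

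The step I expect to be the main obstacle is showing that this last equation always admits a solution with $H^1\neq G^1$ defining a valid (full-rank) target, so that the transformation is genuinely nontrivial. Writing $G^1=1/2\one+g_1W+g_2V+g_3U$ and $H^1=1/2\one+h_1W+h_2V+h_3U$, with $V,U$ the two Paulis anticommuting with $W$, the map $H^1\mapsto p_0H^1+p_WWH^1W$ fixes the $W$-component and rescales the transverse part by $\lambda=p_0-p_W\in[-1,1]$, so the conditions become $h_1=g_1$, $g_2=\lambda h_2$, $g_3=\lambda h_3$. If $(g_2,g_3)\neq(0,0)$ I would take $\lambda$ slightly below $1$, forcing $|h_2|>|g_2|$ (hence $H^1\neq G^1$) while keeping $|\vec h|<1/2$ by continuity; if instead $G^1\in span\{\one,W\}$ (so $g_2=g_3=0$), the balanced choice $\lambda=0$, i.e. $p_0=p_W=1/2$, frees $h_2,h_3$ entirely and I take any small $h_2\neq0$. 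In both cases $H^1\notin span\{\one,W\}$, so $h^1\neq h_w^1$; the target is then reachable in the sense of Theorem \ref{Thgen4} and, having a distinct standard form, is not LU-equivalent to $g\ket{\Psi}$. It is exactly the degenerate case $G^1\in span\{\one,W\}$, where convertibility survives only through the symmetric split $p_0=p_W$, that I expect to demand the most care.
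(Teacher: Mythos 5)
Your proof is correct and follows essentially the same route as the paper's: necessity via Theorem \ref{Thgen4} combined with the single-party marginal condition ${\cal E}(H_i)=G_i$, which forces $G_i\in span\{\one,W\}$ for $i=2,3,4$, and sufficiency via the same two-outcome POVM $\{\sqrt{p}\,h^1(g^1)^{-1}\otimes\one^{\otimes3},\ \sqrt{1-p}\,h^1W(g^1)^{-1}\otimes W^{\otimes3}\}$ with the conditions $h_1^{(1)}=g_1^{(1)}$, $(2p-1)h_{2,3}^{(1)}=g_{2,3}^{(1)}$. Your explicit handling of the degenerate case $g_2^{(1)}=g_3^{(1)}=0$ via the balanced choice $p=1/2$ is, if anything, slightly more careful than the paper, which covers that situation through a separate four-outcome POVM for $g=g^1\otimes\one^{\otimes3}$.
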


Combining Theorem \ref{Thgen4} and Theorem \ref{theoremISO}, we see
that every state that can be reached via LOCC can at the same time
be transformed into another state, and that all states that are not
of the form given in Theorem \ref{theoremISO} (which are almost all)
are isolated. Moreover, the non-isolated generic states which are in
the MES constitute a $10$-parameter family of the form
$G_w\ket{\Psi}$, with $G_w=g_w^1 \otimes g_w^2 \otimes g_w^3 \otimes
g_w^4$ and $w\in \{x,y,z\}$ (excluding the case where $g_w^i\not
\propto \one$ for exactly one $i$). Thus, the set of generic
$4$-qubit entangled states is divided into two subsets with very
different physical properties: a measure-zero set of
LOCC-convertible states
 and a full-measure
set of isolated states. Clearly, the first set appears to be the
physically more relevant one. In particular, its intersection with
the MES, i.e. the set $\{g_w^1\otimes g_w^2 \otimes g_w^3 \otimes
g_w^4 \ket{\Psi}\}$ with $w\in \{x,y,z\}$, gives rise to the most
useful states under LOCC manipulation. Note that all of these states
can be written in a very simple form as
$\ket{0}\ket{\Psi_0}+\ket{1}X^{\otimes 3} \ket{\Psi_0}$, with
$\ket{\Psi_0}$ depending on the SLOCC parameters $\{g_j^{(i)}\}$ and
the seed parameters.

In summary, we have introduced the concept of the MES of $n$-qubit
states $MES_n$, the analogue of the maximally entangled state in the
bipartite case. We have explicitly derived $MES_3$ and $MES_4$ (for
generic states in the latter case) and have shown that almost all
four-qubit states are in $MES_4$. For more than two parties, the MES
contains infinitely many states; however, while $MES_3$ is of
measure zero (and no state is isolated), $MES_4$ is of full measure
because almost all states are isolated. These results imply that
almost all entangled $4$-qubit states are incomparable according to
the LOCC paradigm inducing a rather trivial ordering in the set of
entangled states and implying that almost all states are useless for
entanglement manipulation. However, we determined the zero-measure
subset of generic non-isolated states in the MES, which allows for a
very simple decomposition. A more in-depth understanding of the
physical and mathematical properties of this class of states might
lead to new insights in multipartite entanglement and its
applications. Also, our results open the doors to investigate
well-studied phenomena in the bipartite case which remained mostly
unexplored so far in the multipartite setting. These include the
identification of new entanglement measures, entanglement catalysis
\cite{catalysis}, optimal probabilities of LOCC conversions
\cite{vidal1} or possibly transformations to mixed states
\cite{vidal2} to name a few. It will also be interesting to study
the MES in the case of LOCC transformations among multiple copies of
states. In a forthcoming article \cite{MESlong}, we investigate all
possible state transformations within the four qubit case. We
anticipate that, even in the non-generic case, very few states in
the MES allow for LOCC conversions. Our tools can also be used to
decide convertibility among states of more than four parties (or
even higher dimensions) as, among other reasons, the validity of
Eq.\ (\ref{EqSep}) is independent of the number and dimensions of
the subsystems.

This research was funded by the Austrian Science Fund (FWF):
Y535-N16.

\section{Appendix A: 3-qubit MES}
We provide here the details of the proof of Theorem 1, which we restate here,
\noindent \\ \\
\textit{ {\bf Theorem 1.} The MES of three qubits, $MES_3$, is given by \bea \label{mes3} MES_3=\{g_x^1\otimes g_x^2\otimes
g_x^3 \ket{GHZ}, g_1 \otimes g_2 \otimes \one \ket{W}\},\eea where,
no $g^i_x\propto \one$ (except for the GHZ state)  and $g_1$ and $g_2$ are diagonal.}
\\

The outline of the proof is as follows. Within each SLOCC class, we show first that none of the states in $MES_3$ can be reached via separable maps (SEP). Since SEP includes LOCC, this implies that these states cannot be obtained by any LOCC protocol either. We then prove that all other states can be obtained via LOCC from states that are in the MES. In particular, we present the corresponding LOCC protocols.\\
In order to improve the readability of the appendix, we repeat the results about possible transformation via SEP presented in \cite{appGour}. As in the main text, let us denote by $S(\Psi)=\{S\in \mathcal{G}: S\ket{\Psi}=\ket{\Psi}\}$ the set of symmetries of $\ket{\Psi}$. Then, a state $\ket{\Psi_1} = g \ket{\Psi}$ can be transformed via SEP to $\ket{\Psi_2}= h \ket{\Psi}$ iff
there exists a $m \in N$ and a set of probabilities, $\{p_k\}_{1}^m$ ($p_k\geq 0, \sum_{k=1}^m p_k=1$) and $S_k \in S(\Psi)$ such that \bea \label{EqSep2} \sum_k p_k S_k^\dagger H S_k= r G.\eea
Here, as in the main text, we use the notation $H=h^\dagger h \equiv \bigotimes H_i$ and $G=g^\dagger g \equiv \bigotimes G_i$ corresponding to local operators and $r=\frac{n_{\Psi_2}}{n_{\Psi_1}}$ with $n_{\Psi_i}=||\ket{\Psi_i}||^2$.\\
In order to prove Theorem 1, we consider the two three-partite SLOCC classes, the GHZ class and the W class, seperately, in Lemma 1 and 2 respectively.
Let us start with the GHZ class. In the following, we will use the standard form for states in the GHZ class that was introduced in the main text, i. e. \bea g^1_x \otimes g^2_x\otimes (g^3_x P_{z})\ket{GHZ},\eea
with $P_z=diag(z, z^{-1})$ and $z\in \C$. As mentioned before, every element of the symmetry group can be written
as $\bar{X}^k P_{\bf \gamma}$, with $k\in \{0,1\}$. This follows from the fact that the symmetry group of the GHZ state is generated by \cite{appverstraete} $P_{\gamma_1}\otimes P_{\gamma_2}\otimes
P_{(\gamma_1\gamma_2 )^{-1}}\equiv P_{\vec{\gamma}}$, with $\gamma_i
\in \C$ and $\bar{X}\equiv X^{\otimes 3}$ and the fact that $P_\gamma X= X
P_{\gamma^{-1}}$.
\noindent \\ \\
\textit{ {\bf Lemma 1.}  The subset of states in $MES_3$ which are in the GHZ class is given by \bea \label{mesGHZ}\{g_x^1\otimes g_x^2\otimes
g_x^3 \ket{GHZ}, \ket{GHZ}\},\eea where no $g^i_x\propto \one$.}
\begin{proof}
We denote by $\ket{\Psi_1}= g \ket{GHZ}$ an arbitrary initial state and by $\ket{\Psi_2}= h \ket{GHZ}$ an arbitrary final state. Here, $g=g_x^1\otimes g_x^2\otimes g_x^3 P_{z_g}$, with $z_g=|z_g|e^{i\alpha_g}$, and similarly for $h$. Note that $z_g=\pm 1$ implies that $\ket{\Psi_1}$ is an element of the set given in Eq. (\ref{mesGHZ}). Using Eq. (\ref{EqSep2}), we have that a state $\ket{\Psi_1}$ can be transformed into $\ket{\Psi_2}$ via SEP
 iff there exist finitely many probabilities $p_{k,{\vec \gamma}}$ such that $\sum_{k,{\vec \gamma}} p_{k,{\vec \gamma}} \bar{X}^k P_{\vec \gamma}^\dagger H P_{\vec \gamma}\bar{X}^k=r G$.  Considering now the necessary conditions $\tr( \sum_{k,{\vec \gamma}} p_{k,{\vec \gamma}} \bar{X}^k P_{\vec \gamma}^\dagger H P_{\vec \gamma}\bar{X}^k \proj{lll})= r\tr( G \proj{lll})$, for $l=0,1$, we have \bea\nonumber
\sum_{k,{\vec \gamma}} p_{k,{\vec \gamma}} |z_h|^{2(-1)^k}=r |z_g|^2 \mbox{ and } \sum_{k,{\vec \gamma}} p_{k,{\vec \gamma}} |z_h|^{-2(-1)^k}=r |z_g|^{-2}.\eea
We now show that states $\ket{\Psi_2}$ of the form in Eq. (\ref{mesGHZ}) cannot be obtained from any other state via LOCC.
If $|z_h|=1$, the conditions above imply that $|z_g|=1$ and $r=1$, i.e.
$n_{\Psi_2}=n_{\Psi_1}$. The last condition is equivalent to $h_1^{(1)} h_1^{(2)} h_1^{(3)}  \cos(2\alpha_h)=g_1^{(1)} g_1^{(2)} g_1^{(3)} \cos(2\alpha_g)$. Equating the $\ket{000}\bra{111}$ matrix elements of both sides of Eq. (\ref{EqSep2}), we obtain $\sum_{k,{\vec \gamma}} p_{k,{\vec \gamma}} h_1^{(1)} h_1^{(2)} h_1^{(3)} e^{(-1)^k 2 i \alpha_h}= g_1^{(1)} g_1^{(2)} g_1^{(3)} e^{2 i \alpha_g}$. Hence, if $h_1^{(1)} h_1^{(2)} h_1^{(3)} \neq 0$, the condition $\alpha_h\in\{0,\pi\}$ implies that the same must be true for $\alpha_g$. We used here that a state with $\alpha_g=\pm \pi/2$ is LU equivalent to a state with $\alpha_g\in \{0,\pi\}$. This shows that if $h_1^{(1)} h_1^{(2)} h_1^{(3)}\neq 0$, a state with $z_h=\pm 1$ can only be obtained from a state with $z_g=\pm 1$, which are precisely the states given in Eq. (\ref{mesGHZ}).\\
We will prove now that a state $\ket{\Psi_2}=h_x^1\otimes h_x^2
\otimes h_x^3 \ket{GHZ}$, with $h_x^{(i)}\not \propto \one$ $\forall
i$, i. e. $h_1^{(1)} h_1^{(2)} h_1^{(3)} \neq 0$ and $z_h=\pm 1$,
can neither be obtained from any other state of this form nor from
the GHZ state. Note that $\bar{X}$ commutes with $H$. Thus, we only
have to consider the symmetries $P_{{\vec\gamma}}$. Due to the
discussion above, we have that $r=1$ (see Eq.(\ref{EqSep2})). Taking the trace of Eq. (\ref{EqSep2}) leads to
\begin{eqnarray} \nonumber&\sum_{k,{\vec \gamma}}p_{k,{\vec
\gamma}}(|\gamma_1|^2+|\gamma_1|^{-2})(|\gamma_2|^2+|\gamma_2|^{-2})\times\\\nonumber&(|\gamma_1
\gamma_2|^2+|\gamma_1 \gamma_2|^{-2})=8,\end{eqnarray} which is
satisfied iff $|\gamma_i|=1$ for all $\gamma_i$ occurring in the
sum. Thus, we only have to consider unitary symmetries in Eq.
(\ref{EqSep2}). This implies that the vector containing the
eigenvalues of $G_i$ must be majorized by the corresponding vector
of $H_i$, which implies that $|g^{(i)}_1|$ cannot decrease under
LOCC. Using in addition the fact that the norms of the two states coincide,
i.e. $g_1^{(1)}g_1^{(2)}g_1^{(3)}=h_1^{(1)}h_1^{(2)}h_1^{(3)}$
($r=1$) we have that $h_1^{(i)}=g_1^{(i)}$, $\forall i$. Thus,
$\ket{\Psi_1}$ can only be transformed into $\ket{\Psi_2}$ if the
states were LU equivalent.
As already mentioned in the main text, the GHZ state is in $MES_3$, since all the single qubit reduced states of the GHZ state are completely mixed.\\
We will prove now that a state is in the MES only if it is of the form given in Eq. (\ref{mesGHZ}), where no $g^i_x\propto \one$. In order to show
that, we present the explicit LOCC protocols that yield
all the other states in the GHZ class from a state in the MES.
Let us start by showing that states of the form $\ket{\Psi_2}=\one\otimes h_x^2 \otimes h_x^3 \ket{GHZ}$, i. e. $h_1^{(1)} h_1^{(2)} h_1^{(3)} = 0$, can be obtained from the GHZ state via the following LOCC protocol. Party $2$ applies the POVM $\{h_x^2, h_x^2 Z\}$ and party $3$ applies the POVM $\{ h_x^3,  h_x^3 Z\}$. They communicate their outcomes, which we denote by $i_2,i_3\in \{0,1\}$, to party 1. Then, party $1$ applies $Z^{i_2+i_3}$. Due to the symmetry of the GHZ state, it can be easily seen that this LOCC protocol accomplishes the task.\\
The remaining class of states, namely those where $z_h \neq \pm 1 $, can be obtained from a state with $z_g=\pm 1$, as can be seen as follows. We show that for an
arbitrary final state,  $\ket{\Psi_2} = h_x^1\otimes h_x^2\otimes
h_x^3 P_{z_h} \ket{GHZ}$ (with $z_h \neq \pm 1 $), there exists an
operator $g_x^3$ such that the state $\ket{\Psi_1} = h_x^1\otimes
h_x^2\otimes g_x^3 \ket{GHZ}$ can be transformed into $\ket{\Psi_2}$
via LOCC. Consider the POVM $\{M_1,M_2\}$ with \bea M_1=\sqrt{p}
\one^{\otimes 2}\otimes h_x^3 P_{z_h} (g_x^3)^{-1}\eea and \bea
M_2=\sqrt{p} X^{\otimes 2} \otimes h_x^3 P_{z_h} X (g_x^3)^{-1},\eea
with $p=1/(|z_h|^2+1/|z_h|^2)$. It is easy to see that $g_x^3$ can be chosen s. t. $\{M_1,M_2\}$ constitutes a POVM. In particular, $M_1^\dagger M_1+M_2^\dagger M_2=\one$ iff
\bea P_{z_h}^\dagger H_x^3 P_{z_h}+X P_{z_h}^\dagger H_x^3 P_{z_h} X
= G_x^3.\eea The left hand side of this equation is equal to $1/2\one
+\tilde{b} X$, with $\tilde{b}=2 p b \cos(2\alpha_h)$. Thus, for any
$z,b$ one can chose $G_x^3 = 1/2 \one +\tilde{b} X$ to satisfy
the above condition. The LOCC protocol to transform $\ket{\Psi_2}$ to $\ket{\Psi_1}$ is given by the following procedure. Party $3$ applies the POVM \bea\{h_x^3 P_{z_h} (g_x^3)^{-1},h_x^3 P_{z_h} X (g_x^3)^{-1}\}\eea and communicates the outcome to the other parties. In case of outcome $1$, all other parties do nothing, whereas in case of outcome $2$ they apply a $X$ operation. Due to the fact that $X$ commutes with $h_x^i$ and $X^{\otimes 3}$ is an element of the symmetry group of the GHZ state, the desired state is obtained for both outcomes.\\
\end{proof}

The other SLOCC class for truly entangled three-partite states is the $W$ class. The symmetries of the $W$ state \cite{appverstraete}, $\ket{W}= \ket{001}+\ket{010}+\ket{100}$, are given by

\begin{eqnarray}\label{symW}
&S_{x,y,z}=s^1_{x,y}\otimes s^2_{x,z}\otimes s^3_{x,y,z}\\ \nonumber
&\equiv \left(
    \begin{array}{cc}
      x & y \\
      0 & 1/x \\
    \end{array}
  \right)\otimes\left(
    \begin{array}{cc}
      x & z \\
      0 & 1/x \\
    \end{array}
  \right)\otimes\left(
    \begin{array}{cc}
      x & -y-z \\
      0 & 1/x \\
    \end{array}
  \right),
\end{eqnarray}
where $x,y,z\in\mathbb{C}$ and any state in this SLOCC class can be written as
$x_0\ket{000}+x_1\ket{100}+x_2\ket{010}+x_3\ket{001},$ with $x_i \geq 0$.
Another way of representing an arbitrary state in the $W$ class, which we will use in the following, is $g_1\otimes g_2 \otimes \one \ket{W}$, where \footnote{Note that we can choose here wlog $x_3\neq 0$, since we are considering fully entangled states.} \bea g_1=\begin{pmatrix} 1&0\\ 0& x_1(g)/x_3(g) \end{pmatrix}
  g_2=\begin{pmatrix} x_3(g)&x_0(g)\\ 0& x_2(g) \end{pmatrix}.\eea
\noindent \\ \\
\textit{ {\bf Lemma 2.}  The subset of states in $MES_3$ which are in the W class is given by \bea \label{W}\{ g_1 \otimes g_2 \otimes \one \ket{W}\},\eea where $g_1$ and $g_2$ are diagonal.}

\begin{proof}

In order to prove that the states in this class are in the MES if
$x_0(g)=0$, let us first show that we only need to consider unitary
symmetries. Recall from the main text that any element of a POVM
that can be implemented must be such that the operation applied by
party $3$, $s_{x,y,z}^3$, is unitary. This implies that the corresponding
symmetries obey $z=-y$ and $|x|=1$ (see Eq. (\ref{symW})). Inserting
these symmetries, as well as the expressions for $G$ and $H$, where
$x_0(h)=0$, in Eq. (\ref{EqSep2}) and tracing over the third
party results in

\bea\nonumber &\sum_{x,y} p_{x,y} \left(
    \begin{array}{cc}
     1 & x^\ast y \\
      y^\ast x  & |\frac{x_1(h)}{x_3(h)}|^2+|y|^2 \\
    \end{array}
  \right)
\otimes\left(
    \begin{array}{cc}
      |x_3(h)|^2 & b \\
      b^\ast & c\\
    \end{array}
  \right)
\\\nonumber& =r  \left(\begin{array}{cc}
     1 &0\\
    0  & |\frac{x_1(g)}{x_3(g)}|^2 \\
    \end{array}
  \right)
\otimes\left(
    \begin{array}{cc}
      |x_3(g)|^2 & x_0(g) x_3^\ast(g) \\
      x_0^\ast(g) x_3(g) & |x_2(g)|^2+ |x_0(g)|^2\\
    \end{array}
  \right),\eea where $b= -x^\ast y |x_3(h)|^2$ and $c=|y|^2 |x_3(h)|^2+|x_2(h)|^2$. Equating the $\ket{0}\bra{1} \otimes \ket{1}\bra{0}$ matrix elements of both sides of the equation leads to $\sum_{x,y} p_{x,y} |y|^2=0$. Thus, $y$ must be zero and therefore the symmetry is unitary (see Eq. (\ref{symW})).\\
The unitaries that leave the $W$ state invariant (up to a global phase) are given by $Z(\alpha)^{\otimes 3}$, with $\alpha \in \R$. Thus, $\ket{\Psi_1}$ can be transformed into $\ket{\Psi_2}$ via SEP iff there exist finitely many probabilities $p_\alpha$ such that \bea \sum_\alpha p_\alpha H_1 \otimes Z(-\alpha) H_2 Z(\alpha)\otimes \one =r G_1\otimes G_2 \otimes \one,\eea where we used that $H_1$ commutes with any $Z(\alpha)$. Note that if $x_0(h)=0$, then $H_2$ also commutes with $Z(\alpha)$ and therefore, in this case, we have $ H_1\otimes H_2 \otimes \one =r G_1\otimes G_2 \otimes \one$, which implies the condition $x_0(g)=0$. Hence, states with $x_0(h)=0$ can only be obtained via LOCC from states with $x_0(g)=0$. Moreover, since there is only one POVM element, in this case, the states can only be transformed into each other if they are LU equivalent. From this it follows that states with $x_0(g)=0$ are in the MES. \\
Let us now show that these are the only states of the $W$ class in the MES. In order to do so, we show that any state $\ket{\Psi_2}=h_1\otimes h_2 \otimes \one \ket{W}$ with $x_0(h)\neq 0$ can be obtained from a state $\ket{\Psi_1}=h_1\otimes g_2 \otimes \one \ket{W}$ with $x_0(g)=0$. One can choose $g_2$ such that the POVM \bea\{M_1=\one \otimes h_2 g_2^{-1} \otimes \one , M_2=Z \otimes h_2 Z g_2^{-1}\otimes Z\}\eea accomplishes this task. This can be seen as follows. The condition $M_1^\dagger M_1+M_2^\dagger M_2=\one$ is equivalent to $G_2=H_2+ZH_2 Z$, where the right hand side of this equation is equal to \bea 2  \left(\begin{array}{cc}
     x_3(h)^2 &0\\
    0  & x_0(h)^2+x_2(h)^2 \\
    \end{array}
  \right).\eea Thus, choosing $G_2$ as in the equation above (with $x_0(g)=0$) ensures that $\{M_1,M_2\}$ is a POVM. Hence, the LOCC protocol transforming $\ket{\Psi_1}$ into $\ket{\Psi_2}$ is as follows. Party $2$ applies the POVM $\{h_2 g_2^{-1}, h_2 Z g_2^{-1}\}$ and communicates the outcome of the measurement to the other parties. In case outcome $1$ ($2$) is obtained, all remaining parties apply $\one$ ($Z$) respectively to obtain the desired state.
Thus, the states in the MES in this class are of the form
\bea \ket{\Psi_W}=g_1\otimes g_2 \otimes \one \ket{W},\eea where $g_1$ and $g_2$ are diagonal, i.e.  $x_0(g)=0$.
\end{proof}

In summary, in Lemmas 1 and 2  we have characterized the MES for each of the three-partite entangled SLOCC classes seperately. Combining these results directly proves Theorem 1. Note that the non-unitary symmetry was only used to derive a standard form within the different SLOCC classes, whereas the unitary symmetries, i. e. the Pauli operators, lead to the possible LOCC transformations (see Eq. (\ref{EqSep})).

Writing the states that are in the MES in the computational basis, it is easy to show that any state in $MES_3$ is an element of the set
\bea\nonumber \tilde{S}_3=\{\ket{\Psi(a,\beta,\beta')}\equiv \ket{0}\ket{\Psi_s}+\ket{1}Y(\beta')\otimes Y(\beta)\ket{\Psi_s}\},\eea where
$\ket{\Psi_s}=a\ket{00}+\sqrt{1-a^2}\ket{11}$ is in Schmidt decomposition and the parameters, $a$, $\beta$ and $\beta'$ are simple functions of the parameters $\{g_1^{(i)}\}_{i=1}^3$ for the GHZ class and $\{x_i\}_{i=1}^3$ for the $W$ class.\\

Note that $\tilde{S}_3$ also includes biseparable states, e.g. $a=1$, $\beta=0$ and $\beta^\prime$ arbitrary. Thus, $\tilde{S}_3$ is strictly larger than $MES_3$. However, $MES_3$ can be obtained from $\tilde{S}_3$ by excluding state like those in the GHZ class, with $g^{(i)}_x=0$ for some $i$ (excluding the GHZ state), and states which belong to the same LU-equivalence class as one element within the set.

\section{Appendix B: Generic $4$-qubit MES}

In this section, we consider LOCC transformations among generic
$4$-qubit states and prove Theorems 2 and 3 in the main text, which
we restate below for readability. As discussed in the main text,
these theorems imply that almost all $4$-qubit states are both in $MES_4$
and isolated. Moreover, they allow the characterization of the subset of LOCC-convertible states in the MES.

As we have seen before, for an LOCC transformation from a generic $4$-qubit
state $\ket{\Psi_1} = g \ket{\Psi}$ to another $\ket{\Psi_2}= h
\ket{\Psi}$ (here $|\Psi\rangle$ is any of the seed states given in
Eq.\ (6) in the main text) to be possible, it must hold that \bea
\label{AppEqSep} \sum_k p_k \sigma_k^{\otimes4} H
\sigma_k^{\otimes4}= G,\eea where $H=h^\dagger h \equiv \bigotimes
H_i$ and $G=g^\dagger g \equiv \bigotimes G_i$ for each party
$i=1,2,3,4$. Each of these positive operators, e.\ g.\
$G_i=g_i^\dagger g_i$, is chosen such that $\tr(G_i)=1$ and,
therefore, $G_i=1/2\one + \sum_k g_k^{(i)} \sigma_k$, with
$\g^{(i)}=(g^{(i)}_1,g^{(i)}_2,g^{(i)}_3)\in \R^3$ fulfilling $0\leq
|\g^{(i)}| < 1/2$. As mentioned before, Eq. (\ref{AppEqSep}) implies
that \bea \label{AppEQ_Tensor}{\cal E}_{4}(H)={\cal E}(H_1)\otimes
{\cal E}(H_2)\otimes {\cal E}(H_3)\otimes {\cal E}(H_4). \eea As in the main text, ${\cal E}_4$ is the completely positive map given
in Eq.\ (\ref{AppEqSep}) and ${\cal E}(\rho)=\sum_{k=0}^3 p_k
\sigma_k \rho \sigma_k$. Similarly, we will use the notation ${\cal
E}_{l}(\rho)=\sum_k p_k S_k^\dagger \rho S_k$, where
$S_k=\sigma_k^{\otimes l}$ is acting on $l$ systems and $\rho$
describes $l$ systems. Note that Eq.\ (\ref{AppEQ_Tensor}) only
depends on $\ket{\Psi_2}=h\ket{\Psi}$ and is independent of
$\ket{\Psi_1}=g\ket{\Psi}$. As in the main text, we use the
notation $h^i_w\in span\{\one,W\}$, where $W\in \{X,Y,Z\}$ such that
$H_i=(h^i_w)^\dagger h^i_w=1/2\one+h^{(i)}_w W$.

\begin{appth} \label{AppThgen4} A generic state $h \ket{\Psi}$ is reachable via LOCC from some other state iff either \bi \item[(i)] $h=h^1\otimes \one^{\otimes 3}$ with $h^1\not \propto \one$ arbitrary or
\item[(ii)] $h= h^1\otimes h_w^2 \otimes h_w^3 \otimes h_w^4$, for $w\in\{x,y,z\}$ where $h^1\neq h_w^1$. \ei \end{appth}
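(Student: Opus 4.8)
The plan is to derive both directions from the factorization identity (\ref{AppEQ_Tensor}), which is already known to be necessary for reachability. First I would expand in the Pauli basis: writing $H_i=\tfrac12\one+\sum_{\mu=1}^3 h^{(i)}_\mu\sigma_\mu$ and using $\sigma_k\sigma_\mu\sigma_k=\epsilon_{k\mu}\sigma_\mu$ with signs $\epsilon_{k\mu}\in\{\pm1\}$, the term of $H=\bigotimes_i H_i$ labelled by a multi-index $\vec\mu\in\{0,\dots,3\}^4$ carries coefficient $\prod_i h^{(i)}_{\mu_i}$ (with $h^{(i)}_0=\tfrac12$). The map ${\cal E}_4$ multiplies this term by $\sum_k p_k\prod_i\epsilon_{k\mu_i}$, whereas $\bigotimes_i{\cal E}(H_i)$ multiplies it by $\prod_i\lambda_{\mu_i}$, where $\lambda_\mu\equiv\sum_k p_k\epsilon_{k\mu}$ and $\lambda_0=1$. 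Thus (\ref{AppEQ_Tensor}) is equivalent to a factorization condition that must hold for every active $\vec\mu$ (i.e. $\prod_i h^{(i)}_{\mu_i}\neq0$). Identifying the Pauli labels with $\mathbb{Z}_2^2$, the three sign functions $\epsilon_{\cdot1},\epsilon_{\cdot2},\epsilon_{\cdot3}$ are exactly the nontrivial characters, so $\prod_i\epsilon_{k\mu_i}$ is the character of the Klein-group sum $\tau=\bigoplus_i\mu_i$. This gives the clean reformulation I would use throughout: for every active $\vec\mu$, if $\tau=0$ then $\prod_{i:\mu_i\neq0}\lambda_{\mu_i}=1$, and if $\tau\neq0$ then $\lambda_\tau=\prod_{i:\mu_i\neq0}\lambda_{\mu_i}$. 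In particular, the ``annihilating'' multi-indices (those with $\tau=0$) are the strongest constraints, since $|\lambda_\mu|\le1$ forces every $\lambda_\mu$ they involve to equal $\pm1$.

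For the necessity direction I would first observe that the transformation is genuine (the source is not LU-equivalent to $h\ket{\Psi}$) precisely when ${\cal E}$ is not a Pauli unitary, i.e. some $|\lambda_\mu|<1$; moreover $G\neq H$ requires such a contracted direction to be active in at least one party. The heart of the proof is then to show that, unless the active pattern is of type (i) or (ii), the factorization conditions force all three $\lambda_\mu=\pm1$, whence ${\cal E}$ is one of the symmetries $\sigma_k^{\otimes4}$ and the transformation is trivial. The organizing statement is that (i)/(ii) is exactly the case in which, after discarding one party, the remaining three have active directions contained in a single common direction $w$. I would prove the contrapositive by a case analysis over the patterns that violate this: three parties carrying three distinct single directions produce the annihilator $(X,Y,Z)$, forcing $\lambda_1\lambda_2\lambda_3=1$ and hence all $\lambda_\mu=\pm1$; a direction repeated across two parties is itself an annihilator forcing $\lambda_\mu^2=1$; and the remaining patterns (two ``generic'' parties, or one generic party together with misaligned single directions, or four single directions no three of which agree) combine a same-direction constraint with a cross ($\tau\neq0$) constraint to pin all $\lambda_\mu$ to $\pm1$. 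Conversely, for patterns of type (ii) the only annihilators involve $w$, so they at most impose $\lambda_w=\pm1$ while leaving the transverse directions free to contract; this both reproduces the structural form and explains the exclusion $h^1\neq h_w^1$, since if party $1$ were also purely along $w$ then $\lambda_w=\pm1$ would leave every $H_i$ unchanged and give $G=H$.

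For sufficiency I would construct, for each admissible form, an explicit source and a one-round LOCC protocol. Taking $\{p_k\}$ supported on $\{0,w\}$ yields ${\cal E}(\rho)=p_0\rho+p_w\,\sigma_w\rho\sigma_w$, so $\lambda_w=1$ while the two transverse directions are strictly contracted; setting $G={\cal E}_4(H)$ then defines a genuinely different source $g\ket{\Psi}$ that agrees with $h\ket{\Psi}$ on the collinear parties and has a shorter party-$1$ Bloch vector. The protocol is the product POVM $\{M_0=\sqrt{p_0}\,hg^{-1},\ M_w=\sqrt{p_w}\,h\,\sigma_w^{\otimes4}g^{-1}\}$, in which only party $1$ performs the nontrivial two-outcome measurement and parties $2,3,4$ apply the conditional unitary $\sigma_w$; its completeness $M_0^\dagger M_0+M_w^\dagger M_w=\one$ is equivalent to $G={\cal E}_4(H)$, and since $\sigma_w^{\otimes4}$ is a symmetry of $\ket{\Psi}$ both outcomes return $h\ket{\Psi}$ deterministically. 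Form (i) is the same construction with parties $2,3,4$ trivial, where the factorization holds automatically for every ${\cal E}$, so any strictly contracting channel furnishes a valid source.

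I expect the case analysis in the necessity step to be the main obstacle: one must enumerate all active-direction patterns that are not of type (i)/(ii) and check uniformly that each supplies an annihilating or combined constraint collapsing the channel to a symmetry, while keeping the genuineness/LU bookkeeping (the role of $h^1\neq h_w^1$ and $h^1\not\propto\one$) correct and, crucially, not overlooking the higher-order (three- and four-party) annihilators, which are indispensable and are \emph{not} implied by the two-party conditions alone.
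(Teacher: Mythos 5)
Your proposal is correct and shares the paper's overall architecture (necessity extracted from the factorization ${\cal E}_4(H)=\bigotimes_i{\cal E}(H_i)$, sufficiency from explicit one-round protocols built on the symmetry $\sigma_k^{\otimes4}$ --- your POVM $\{\sqrt{p_0}\,hg^{-1},\sqrt{p_w}\,h\sigma_w^{\otimes4}g^{-1}\}$ and the four-outcome protocol for case (i) are essentially the paper's Eqs.\ for $M_1,M_2$ and $M_i$), but the technical core of your necessity argument is genuinely different. The paper only uses the two-party marginals of the factorization, encoded as the Hadamard-product condition $\h^{(I)}(\h^{(J)})^T\odot(N_1-N_2)=0$, and then needs a separate monotonicity/majorization argument to exclude $h_w^{\otimes4}\ket{\Psi}$; you instead expand the full four-party equation over Pauli multi-indices and read off one scalar constraint $\lambda_\tau=\prod_{i:\mu_i\neq0}\lambda_{\mu_i}$ per active $\vec\mu$, with the $\tau=0$ ``annihilators'' doing the work of pinning $\lambda$'s to $\pm1$. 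This buys two things. First, it is more uniform: the exclusion of $h^1=h_w^1$ falls out of the same algebra (two collinear nontrivial parties give the annihilator $(w,w)$, hence $\lambda_w^2=1$ and $G$ LU-equivalent to $H$ --- note only that for $\lambda_w=-1$ the $H_i$ are \emph{not} unchanged, merely conjugated by the symmetry $\sigma_{w'}^{\otimes4}$, so phrase that step carefully). Second, and more importantly, your insistence on the three-party annihilators is not pedantry but actually necessary: for the pattern $h_x^1\otimes h_y^2\otimes h_z^3\otimes\one$ the three pairwise conditions $\eta_i\eta_j=\eta_k$ admit the spurious solution $\eta_1=\eta_2=\eta_3=0$ (the uniform $p_k=1/4$ channel), which is killed only by the weight-three annihilator $\lambda_1\lambda_2\lambda_3=1$; the paper's parenthetical ``otherwise $\eta_1^2=1$'' glosses over exactly this solution, so your approach closes a step that the published argument leaves terse at best. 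The price you pay is the case enumeration over active-direction patterns, which is longer than the paper's, but each case is settled by the same two-move argument (find an annihilator, then a cross constraint), so it is routine.
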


\begin{proof}
Throughout the proof, we use lowercase indices $i\in\{1,2,3\}$ for
the coordinates of the SLOCC parameters and uppercase indices
$I\in\{1,2,3,4\}$ for the parties, e.\ g.\ $h_i^{(I)}$ denotes the $i$th component of the operator $h$ acting on system $I$.

{\it Only if:} We first show that the conditions stated in the
Theorem are necessary for a state $h \ket{\Psi}$ to be obtainable
via LOCC from some other state $g\ket{\Psi}$. From Eq.\
(\ref{AppEQ_Tensor}), it follows that ${\cal E}_{2}(H_1\otimes
H_2)={\cal E}(H_1)\otimes {\cal E}(H_2)$ must hold (and similarly
for other pairs of parties). This is equivalent to \bea
\label{Eq_sep2} \h^{(1)}(\h^{(2)})^T \bigodot (N_1-N_2)=0,\eea where
$\odot$ is the Hadamard product (i.\ e.\ entry-wise matrix
multiplication), $\h^{(I)}=(h_1^{(I)},h_2^{(I)},h_3^{(I)})^T$ for
any system $I$ and $N_1=\vec{\e} \vec{\e}^T$, with
$\vec{\e}=(\e_1,\e_2,\e_3)^T$
and \bea N_2=\begin{pmatrix} \e_0 &\e_3& \e_2\\ \e_3& \e_0& \e_1\\
\e_2 & \e_1& \e_0\end{pmatrix}. \eea Here, we use the notation
$\e_0=\sum_{k=0}^3 p_k=1$ and $\e_i=p_0+p_i-(p_j+p_k)$ (where
$i,j,k$ are assumed here to be all different).

Notice that if more than one $\e^2_i$ equals $1$, then only one
$p_k$ differs from zero, which implies that the initial and the
final states are LU equivalent. Thus, we are looking for solutions
of Eq. (\ref{Eq_sep2}) where there exists at most one $i\in
\{1,2,3\}$ such that $\e_i^2=1$.

Clearly, if $\h^{(1)}(\h^{(2)})^T=0$, which is the case iff
$\h^{(1)}=0$ and/or $\h^{(2)}=0$, Eq.\ (\ref{Eq_sep2}) is satisfied.
The same reasoning extends to other pairs of parties and this leads
to the states of case (i) of the Theorem, where we need to take into
account that for one party it must hold that $\h^{(I)}\neq0$ since
we know that seed states are in the MES. In case
$\h^{(1)}(\h^{(2)})^T\neq 0$, there can be at most one $i$ such that
$h_i^{(1)} h_i^{(2)}\neq 0$, since otherwise more than one
$\eta_i^2$ must be $1$, which can be seen by looking at the diagonal
entries of Eq.\ (\ref{Eq_sep2}). Let us consider first the case in
which no $\eta_i^2=1$. Then, $\h^{(1)} = (0,h_2^{(1)},h_3^{(1)})$ and
$\h^{(2)} = (h_1^{(2)},0,0)$ up to permutations of the parties
and/or entries of the vectors. Moreover, imposing Eq.\
(\ref{Eq_sep2}) for parties (1,2) and 3 and parties (1,2) and 4 leads to
$\h^{(3)} =\h^{(4)} =0$ (otherwise $\eta_1^2=1$). Those states are
included in case (ii) of the Theorem. Finally, let us consider the
case in which $\e_i^2=1$ for exactly one $i$. Say, without loss of
generality, $\eta_1^2=1$. Then, by looking at Eq.\ (\ref{Eq_sep2})
for other pairs of parties it must hold that $\h^{(1)}$ is
arbitrary, $\h^{(2)} = (h_1^{(2)},0,0)$, $\h^{(3)} =
(h_1^{(3)},0,0)$ and $\h^{(4)} = (h_1^{(4)},0,0)$ up to
permutations, which corresponds to the states of case (ii) of the
Theorem. However, it remains to show that $\h^{(1)}$ is not
completely arbitrary as the case $\h^{(1)} = (h_1^{(1)},0,0)$ must
be excluded. To see this, notice that ${\cal E}(H_I)=G_I$ for all
parties $I$ and $\eta_1^2=1$ imposes that $|h_1^{(I)}|=|g_1^{(I)}|$.
This, together with the fact that $|\g^{(I)}|$ cannot decrease under
deterministic LOCC transformations, which was proven in the main text, imposes
that $h_w^1\otimes h_w^2 \otimes h_w^3 \otimes h_w^4|\Psi\rangle$
cannot be reached from any other LU-inequivalent state.

{\it If:} Let us now show that the states given in the Theorem can
indeed be reached via LOCC. Let us first treat the states belonging to case (ii)
where, without loss of generality, we choose $w=x$. Consider
$\ket{\Psi_1}=g^1_x\otimes h^2_x \otimes h^3_x \otimes h^4_x
\ket{\Psi}$ for some $g^1_x$, which is specified below, and the two-outcome POVM $\{M_1,M_2\}$
with
\begin{align}
M_1&=\frac{1}{\sqrt{2}}h^1 (g^1_x)^{-1} \otimes \one^{\otimes
3},\nonumber\\
M_2&=\frac{1}{\sqrt{2}} h^1 X (g^1_x)^{-1} \otimes X^{\otimes
3}.\label{AppPOVM1}
\end{align}
Since $1/2 (X H_1 X + H_1)= 1/2 \one + h^{(1)}_1 X$, $\{M_1,M_2\}$
is a valid POVM whenever $h^{(1)}_1=g^{(1)}_1$. Moreover, the POVM
can be implemented by LOCC: party 1 implements the POVM $\{h^1
(g^1_x)^{-1}/\sqrt{2},h^1 X(g^1_x)^{-1}/\sqrt{2}\}$, and in case of
the second outcome the other parties implement the LU $X$. Since
$[h_x^I,X]=0$ and $X^{\otimes4}|\Psi\rangle=|\Psi\rangle$ for any
seed state $|\Psi\rangle$, both branches of the protocol lead to the
same outcome. Thus, any state $h^1\otimes h_w^2 \otimes h_w^3
\otimes h_w^4|\Psi\rangle$ with arbitrary $h^{(1)}\not\propto h_x^1$
can be obtained by LOCC from $\ket{\Psi_1}$ if $h^{(1)}_1=g^{(1)}_1$
holds.

Let us now consider the case of the states of case (i), where
$h^1$ is arbitrary, including, in particular, the case where $h^1=h_x^1$
and all the other $h^i$ are proportional to the identity. This state
can be obtained form the seed state $\ket{\Psi}$ via the POVM
$\{M_i\}_{i=0}^3$, with \bea M_i=\frac{1}{\sqrt{2}} h^1 \sigma_i \otimes
\sigma_i^{\otimes 3}. \eea Arguing as above, it is clearly seen that
this POVM can also be implemented by LOCC.
\end{proof}

\begin{appth} \label{ApptheoremISO}
A generic state $g \ket{\Psi}$ is is convertible via LOCC to some other state iff (up to
permutations) $g=g^1\otimes g_w^2 \otimes g_w^3 \otimes g_w^4$ with
$w\in \{x,y,z\}$ and $g^1$ arbitrary.
\end{appth}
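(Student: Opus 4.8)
The plan is to prove both implications, using the already-established Theorem~\ref{AppThgen4} for the forward direction and an explicit protocol for the converse. For the \emph{only if} direction, suppose $g\ket{\Psi}$ is convertible by LOCC into some LU-inequivalent $h\ket{\Psi}$. Then $h\ket{\Psi}$ is reachable, so Theorem~\ref{AppThgen4} forces $h$ into one of its two listed forms; in both, the operators on parties $2,3,4$ satisfy $h_i^{(I)}=0$ for the two Pauli directions orthogonal to a common $W$ (in case (i) all three vanish). Tracing Eq.~(\ref{AppEqSep}) down to a single party gives ${\cal E}(H_I)=G_I$, which in the Pauli basis is the scaling relation $g_i^{(I)}=\eta_i h_i^{(I)}$, with $\eta_i=p_0+p_i-(p_j+p_k)$ the eigenvalues of the single-qubit map ${\cal E}$. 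For $I\in\{2,3,4\}$ and $i$ orthogonal to $W$ this forces $g_i^{(I)}=0$ irrespective of the $\eta_i$, so $G_I\in span\{\one,W\}$, i.e.\ $g_I=g_w^I$, while $g^1$ stays unconstrained---precisely the asserted form.

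For the \emph{if} direction I would exhibit a protocol reaching a genuinely new state. Taking $w=x$ without loss of generality, I keep parties $2,3,4$ fixed and aim at $h=h^1\otimes g_x^2\otimes g_x^3\otimes g_x^4$. Party~$1$ performs the two-outcome measurement $\{M_0=\sqrt{p_0}\,h^1(g^1)^{-1},\,M_1=\sqrt{p_1}\,h^1 X(g^1)^{-1}\}$ and, on the second outcome, parties $2,3,4$ each apply $X$. Since $[g_x^I,X]=0$ and $X^{\otimes4}\ket{\Psi}=\ket{\Psi}$, both branches send $\ket{\Psi_1}$ to the same output $h\ket{\Psi}$, so the map is deterministic and one-way LOCC. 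The measurement is valid precisely when $p_0H_1+p_1XH_1X=G_1$, which in Pauli components reads $h_1^{(1)}=g_1^{(1)}$ and $(p_0-p_1)h_i^{(1)}=g_i^{(1)}$ for $i=2,3$.

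The step I expect to be the crux is certifying that the target is LU-inequivalent to the source for \emph{every} admissible $g^1$, so I would split into two cases. If $g^1$ carries a nonzero $Y$ or $Z$ component, I take $0<p_0-p_1<1$ close to~$1$; then $h_i^{(1)}=g_i^{(1)}/(p_0-p_1)$ strictly enlarges that component, and by continuity $|\h^{(1)}|$ remains below $1/2$ while exceeding $|\g^{(1)}|$, so the LU-invariant local spectrum on party~$1$ changes and the two states cannot be LU-equivalent. If instead $g^1$ has only an $X$ component, the asymmetric choice reproduces the source, so I switch to the balanced measurement $p_0=p_1=1/2$: now $\eta_2=\eta_3=0$, the constraints on $h_2^{(1)},h_3^{(1)}$ drop out, and I am free to turn on a small nonzero $Y$ component, again pushing $|\h^{(1)}|$ past $|\g^{(1)}|$. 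In either case verifying the full identity~(\ref{AppEqSep}), rather than only its marginals, is immediate: only the $k=0,1$ terms survive and conjugation by $X$ fixes $G_2,G_3,G_4$, giving $[p_0H_1+p_1XH_1X]\otimes G_2\otimes G_3\otimes G_4=G$.
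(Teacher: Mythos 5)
Your proof is correct and follows essentially the same route as the paper: the \emph{only if} part combines Theorem~\ref{AppThgen4} with the single-party condition ${\cal E}(H_I)=G_I$, and the \emph{if} part uses the same two-outcome POVM $\{\sqrt{p}\,h^1(g^1)^{-1}\otimes\one^{\otimes3},\sqrt{1-p}\,h^1X(g^1)^{-1}\otimes X^{\otimes3}\}$. Your explicit case split at $p_0=p_1=1/2$ for $g^1\in span\{\one,X\}$ is a welcome clarification of a point the paper passes over quickly (and it subsumes the paper's separate four-outcome POVM for $g=g^1\otimes\one^{\otimes3}$).
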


\begin{proof}

{\it If:} Let us choose without loss of generality $w=x$, since the
same argument applies to any other choice. Following the same
methods as in the proof of Theorem \ref{AppThgen4}, it can be
readily seen that $\ket{\Psi_1}=g \ket{\Psi}$ with $g=g^1\otimes
g_x^2 \otimes g_x^3 \otimes g_x^4$ can be transformed into some (LU-inequivalent)
state $\ket{\Psi_2}=h \ket{\Psi}$ with $h=h^1\otimes g_x^2 \otimes
g_x^3 \otimes g_x^4$. For that, choose a similar POVM $\{M_1,M_2\}$
to that of Eq. (\ref{AppPOVM1}) where now

\begin{align}
M_1&=\sqrt{p}h^1 (g^1)^{-1} \otimes \one^{\otimes 3},\nonumber\\
M_2&=\sqrt{1-p} h^1 X (g^1)^{-1} \otimes X^{\otimes 3}
\end{align}
with $1>p>0$ arbitrary. This is indeed a POVM if $(1-p) X H_1 X +
pH_1= G_1$, which amounts to $h_1^{(1)}=g_1^{(1)}$ and
$(2p-1)h_{2,3}^{(1)}=g_{2,3}^{(1)}$. Since $0\leq
|\g^{(i)}|,|\h^{(i)}| < 1/2$ is the only condition for the states,
given any state $\ket{\Psi_1}$ of the above form, there is always
some value of $p$ so that it can be transformed by LOCC to a non-LU-equivalent state $\ket{\Psi_2}$ of the above form by increasing (by
the same proportion) the parameters $g_{2}^{(1)}$ and $g_{3}^{(1)}$.

Similarly, one can show that if $\ket{\Psi_1}=g \ket{\Psi}$ with
$g=g_1\otimes \one^{\otimes 3}$, then the state can be transformed by LOCC into another state. This is any state of the form $\ket{\Psi_2}=h^1\otimes
\one^{\otimes 3} \ket{\Psi}$ with $h^1$ such that there exists
probabilities $p_k$ and therefore values of $\e_k$ such that
$G_1={\cal E}(H_1)=1/2 \one +\sum_i h^{(1)}_i \e_i \sigma_i$. The
corresponding POVM is $\{M_i\}_{i=1}^4$ with \bea M_i=\sqrt{p_i} h^1
\sigma_i (g^1)^{-1}\otimes (\sigma_i)^{\otimes 3},\eea which
completes the sufficient part of the proof.

{\it Only if:}
Due to Theorem \ref{AppThgen4}, we know that the only states which can be reached via LOCC are of the form $\ket{\Psi_2}= h \ket{\Psi}$ with
either $h=h^1\otimes \one^{\otimes 3}$ with $h^1\not \propto \one$ arbitrary (case (i)) or $h= h^1\otimes h_w^2 \otimes h_w^3 \otimes h_w^4$, for $w\in\{x,y,z\}$ where $h^1\neq h_w^1$ (case (ii)). Thus, any LOCC-convertible state $\ket{\Psi_1}=g \ket{\Psi}$ can only be transformed into one of those states. It is easy to see now that $\ket{\Psi_1}$ must be of the form given in Theorem \ref{ApptheoremISO}, since Eq. (\ref{AppEqSep}) implies that ${\cal E}_1(H_i)=1/2\one+ \sum_k \e_k h_k^{(i)} \sigma_k=G_i$. Thus, a component of $G_i$ can only be non vanishing if the corresponding component of $H_i$ is non vanishing, which proves the statement.\end{proof}


\begin{thebibliography}{widest-label}

\bibitem{Gothesis97} D. Gottesman, Ph.D. Thesis,
quant-ph/9705052.

\bibitem{RaBr01} R. Raussendorf and H.J. Briegel, Phys. Rev. Lett. \textbf{86}, 5188 (2001).

\bibitem{AmFa08} For a review on multipartite systems see for instance L. Amico, R. Fazio, A. Osterloh, V. Vedral, {\it Entanglement in many-body systems}, Rev. Mod. Phys. {\bf 80}, 517
(2008) and references therein.

\bibitem{reviews} See e.\ g.\ the reviews M.B. Plenio and S. Virmani, Quantum Inf. Comput. \textbf{7}, 1 (2007); R. Horodecki \textit{et al.}, Rev. Mod. Phys. \textbf{81}, 865 (2009).


\bibitem{nielsen} M. A. Nielsen, Phys. Rev. Lett. \textbf{83}, 436 (1999).

\bibitem{teleportation} C. H. Bennett, G. Brassard, C. Cr\'epeau, R. Jozsa,
A. Peres, and W. K. Wootters, Phys. Rev. Lett. \textbf{70}, 1895
(1993).

\bibitem{ekert} A. K. Ekert, Phys. Rev. Lett. \textbf{67}, 661 (1991).

\bibitem{DistillBenett} C. H. Bennett, H. J. Bernstein, S. Popescu and B. Schumacher, Phys. Rev. A \textbf{53}, 2046-2052, (1996).

\bibitem{chitambar1} See e.\ g.\ M. J. Donald, M. Horodecki, and O. Rudolph, J. Math. Phys., {\bf 43}, 4252 (2002); E. Chitambar, Phys. Rev. Lett. \textbf{107}, 190502
(2011); E. Chitambar, W. Cui, and H.-K-. Lo, Phys. Rev. Lett.
\textbf{108}, 240504 (2012); E. Chitambar, D. Leung, L. Mancinska,
M. Ozols, and A. Winter, arXiv:1210.4583v1 (2012).

\bibitem{turgut} S. Turgut, Y. G\"ul, and N.K. Pak, Phys. Rev. A \textbf{81}, 012317 (2010); S. Kintas and S. Turgut, J. Math. Phys. \textbf{51}, 092202 (2010).

\bibitem{barbara} B. Kraus, Phys. Rev. Lett. \textbf{104}, 020504 (2010); Phys. Rev. A \textbf{82}, 032121 (2010).

\bibitem{slocc3} W. D\"ur, G. Vidal, and J.I. Cirac, Phys. Rev. A \textbf{62}, 062314 (2000).

\bibitem{slocc4} F. Verstraete, J. Dehaene, B. De Moor, and H. Verschelde, Phys. Rev. A, \textbf{65}, 052112 (2002).

\bibitem{Gour} G. Gour and N.R. Wallach, New J. Phys. \textbf{13}, 073013 (2011).

\bibitem{sep} C.H. Bennett, D.P. DiVincenzo, C.A. Fuchs, T. Mor,
E. Rains, P.W. Shor, J.A. Smolin, and W.K. Wootters, Phys. Rev. A
\textbf{59}, 1070 (1999).

\bibitem{sepnotlocc} E. Chitambar and R. Duan,  Phys. Rev. Lett. \textbf{103}, 110502 (2009).

\bibitem{maxent} See e.\ g.\ P. Facchi, G. Florio, G. Parisi, and S. Pascazio, Phys. Rev. A \textbf{77}, 060304(R) (2008); G. Gour and N.R. Wallach, J. Math. Phys. \textbf{51}, 112201 (2010) and references therein.

\bibitem{Bennett99} This can be easily seen by noting that the bipartite
entanglement between a single qubit and the rest cannot be increased
via LOCC and in [C.H. Bennett, S. Popescu, D. Rohrlich, J.A. Smolin,
A.V. Thapliyal, Phys. Rev. A \textbf{63}, 012307 (2000)] it has been
shown that two states, which have the same local entropies are
either LU equivalent or LOCC incomparable.

\bibitem{verstraete} F. Verstraete, J. Dehaene, and B. De Moor, Phys. Rev. A \textbf{65}, 032308 (2002).


\bibitem{us} J.I. de Vicente, T. Carle, C. Streitberger, and B. Kraus, Phys. Rev. Lett. \textbf{108}, 060501 (2012).

\bibitem{LU4} N. Linden and S. Popescu, Fortsch. Phys. \textbf{46}, 567 (1998).

\bibitem{catalysis} D. Jonathan and M.B. Plenio, Phys. Rev. Lett. \textbf{83}, 3566 (1999).

\bibitem{vidal1} G. Vidal, Phys. Rev. Lett. \textbf{83}, 1046 (1999).

\bibitem{vidal2} G. Vidal, Phys. Rev. A \textbf{62}, 062315 (2000).

\bibitem{MESlong} J.I. de Vicente, C. Spee, and B. Kraus, in preparation.






\end{thebibliography}

\begin{thebibliography}{widest-label}

\bibitem{appGour} G. Gour and N.R. Wallach, New J. Phys. \textbf{13}, 073013 (2011).

\bibitem{appverstraete} F. Verstraete, J. Dehaene, and B. De Moor, Phys. Rev. A \textbf{65}, 032308 (2002).

\end{thebibliography}
\end{document}